\documentclass[11pt]{article}

\usepackage[margin=1.15in]{geometry}
\usepackage{amsmath,amssymb,amsthm}
\usepackage[hidelinks]{hyperref}

\hypersetup{
  pdftitle={Safe Quotes for Retroactive Liquidity Pools},
  pdfauthor={Peter Bro Miltersen}
}

\newtheorem{theorem}{Theorem}
\newtheorem{lemma}{Lemma}

\newcommand{\init}{\textnormal{\textbf{init}}}
\newcommand{\provide}{\textnormal{\textbf{provide}}}
\newcommand{\reclaim}{\textnormal{\textbf{reclaim}}}
\newcommand{\provides}{\provide{}s}
\newcommand{\reclaims}{\reclaim{}s}
\newcommand{\lockSwapAtoB}{\textnormal{\textbf{lockSwapAtoB}}}
\newcommand{\lockSwapBtoA}{\textnormal{\textbf{lockSwapBtoA}}}
\newcommand{\execute}{\textnormal{\textbf{execute}}}
\newcommand{\cancel}{\textnormal{\textbf{cancel}}}

\title{Safe Quotes for Retroactive Liquidity Pools}
\author{Peter Bro Miltersen\footnote{Peter Bro Miltersen is an independent researcher. He dedicates this paper to the memory of Kurt Nielsen who first suggested the lock-swap mechanism and worked hard to get it deployed. Kurt sadly passed away much too early in July 2026.}}
\date{July 30, 2026}

\begin{document}
\maketitle

\begin{abstract}
Automated market makers exchange assets through liquidity pools whose quoted
prices depend on their reserves, with constant product pools being the most common.  When such pools reside on different blockchains
or shards, a sequence of swaps cannot in general be executed atomically.
Aanes {\em et al.} introduced lock-swaps and
retroactive constant product liquidity pools to provide price guarantees
for such a setting. A
retroactive pool implicitly maintains a virtual pool for each possible execute/cancel resolution of its active locks. In the presence of active locks, serving a new swap request requires computing a {\em safe quote}; a quote with an output
that does not exceed the minimum possible output, taken over all virtual pools. The quote being safe is a hard constraint ensuring the integrity of the pool. A soft constraint is to make the quote as close to the minimum possible output as possible. Aanes {\em et al.} gave a
simple and efficient algorithm for computing the exact minimum when
unresolved \provides{} and \reclaims{} of liquidity do
not coexist, showed by an explicit example that the algorithm fails in
general, and left the computational complexity of the general case open.
In this paper, we show that unless P is equal to NP, there is no polynomial time algorithm that computes in the general case a safe quote with any fixed multiplicative approximation ratio (e.g., 50\%) relative to the exact minimum. This seems like a severe obstacle for deployment of the lock-swap functionality. However, we also present two simple and practical algorithms for computing safe quotes that have input-dependent approximation ratios that are likely to be satisfactory in practice, thus circumventing that obstacle.
\end{abstract}

\noindent\textit{Keywords:} DeFi, Automated Market Makers, Constant Product
Liquidity Pools, Sharded Blockchains, Cross-chain DeFi.

\section{Introduction}
\label{sec:introduction}

Constant-product automated market makers allow traders to exchange two assets
without a conventional order book.  Such a market maker is particularly easy
to use when all swaps in a transaction are executed atomically: an
arbitrageur can submit a cycle of swaps and arrange that either the entire
cycle is executed or none of it is.  This protection is generally unavailable
when the pools involved lie on different blockchains or on different shards
of a sharded blockchain.  While a trader waits for one swap to settle,
another trade may change the price offered by a later pool, and a planned
arbitrage cycle may end in a loss.

Aanes, Gravgaard, Miltersen, Nielsen, and Pourpouneh
\cite{aanes2025lockswap}
 introduced
\emph{lock-swaps} to recover the relevant price guarantee without freezing
the pool.
A lock-swap gives its holder an option: acquire a quote for a swap,
execute the swap later, or cancel it.  A trader can first acquire one
such lock for each leg of a cross-chain transaction and then execute all the locked swaps
if the complete chain of transactions is satisfactory.  Other traders may continue to
swap in the meantime, and liquidity providers may continue to provide and
reclaim liquidity; the locks fix the quoted prices, not the pool itself.
Aanes {\em et al.} implement the lock-swap functionality by a construction called a \emph{retroactive pool}. An
active lock must remain executable whichever of the already-active locks are
later executed or canceled, so the
retroactive pool conceptually maintains one
\emph{virtual pool} for each execute/cancel resolution of the active locks;
with $k$ active locks there are $2^k$ virtual pools.  The default quote for
a new
lock-swap is then the minimum, over all virtual pools, of the exact swap output.  The
computational challenge is to find this minimum without explicitly enumerating the $2^k$ virtual pools, incurring exponential time complexity.
For an important restricted case, Aanes {\em et al.} overcome that challenge
completely.  Their Theorem~4 states that if the current state contains no
unresolved \provide{} operations, or no unresolved \reclaim{} operations,
then the minimizing resolution is fixed and simple: for an A-to-B lock-swap
request, execute every active A-to-B lock and cancel every active B-to-A lock.
Only one virtual pool per swap direction is then needed, a quote takes time
linear in the stored state, and if \reclaims{} are disallowed while locks are
active, two running sums give constant-time quotes.  The paper also shows, by
an explicit counterexample, that the simple rule fails as soon as unresolved
\provides{} and \reclaims{} coexist.  
Aanes {\em et al.} accordingly posed an open problem: can the general minimum
be computed in polynomial time, is it NP-hard, and if it is hard, do useful
approximation algorithms exist?

The present paper resolves the open problem in a negative and a positive
direction.  We establish NP-hardness of computing not only the exact minimum
quote among the virtual pools,  but also \emph{safe} quotes with any constant
multiplicative approximation ratio (say, 50\%) relative to that minimum.  Here, a quote is safe if it is less than or equal to the exact minimum. Being safe is the hard constraint needed to ensure the integrity of the pool: a conservative quote may be less
attractive to a trader, but a too optimistic quote can leave the reference
pool unable to honor the quote of another active lock or even bring about a
violation of the constant product invariant of the pool. The NP-hardness of
computing such quotes with any constant-factor approximation guarantees seems
a serious obstacle for the deployment of retroactive pools. Fortunately, we
are able to give two algorithms that are computationally efficient in theory as well as in practice, with non-constant approximation ratios
that are functions of a directly observable load parameter.  When the assets of the stored event
list are light relative to the base pool (as is likely to be the case in
practice), both algorithms lead to quotes that are provably close to optimal. We therefore find it likely that these algorithms could be used for enabling a practical deployment of retroactive liquidity pools. 

We recall the model and fix notation before stating and proving our results.  The
presentation below is sufficient for the proofs in this paper; we refer to Aanes {\em et al.} for the description of
the full abstract data type that the retroactive pool implements, its correctness
properties, and discussions of the cross-chain protocol using it.

\subsection{Constant-product pools and liquidity operations}
\label{sec:pool-model}

A pool holds positive amounts $a$ and $b$ of two assets, called A and B.  We
call $(a,b)$ its \emph{asset amount pair}, or its \emph{allocation}.  In the
fee-free, or \emph{exact}, model, a trader who sends $x>0$ units of A receives
$y>0$ units of B determined by
\[
  (a+x)(b-y)=ab.
\]
Thus
\begin{equation}
  y=\frac{bx}{a+x}.
  \label{eq:exact-output}
\end{equation}
The product $ab$ is unchanged by an exact swap.  Swaps with fees can only
increase it, but the open problem studied here assumes exact swaps in order to
make the quote calculation unambiguous.

Liquidity providers own shares of the pool represented by liquidity tokens.
A
\emph{liquidity token portion} is a real-valued amount of such tokens.
We write $z>0$ for the total amount of liquidity tokens minted and not yet
burned.  The initialization operation
\[
  t\leftarrow \init(a,b)
\]
creates the asset amount pair $(a,b)$, sets $z=1$, and returns a token portion
$t$ of amount $1$.

There are two operations for changing the supplied liquidity.  A
\emph{provide} operation
\[
  t\leftarrow \provide(p,q), \qquad p,q\geq 0,\quad p+q>0,
\]
adds $p$ units of A and $q$ units of B.  The amounts need not be proportional
to the current pool holdings.  Starting from $(a,b,z)$, the operation produces
\begin{equation}
  a'=a+p,\qquad b'=b+q,\qquad
  z'=z\sqrt{\frac{(a+p)(b+q)}{ab}},
  \label{eq:provide-replay}
\end{equation}
and returns a new liquidity token portion of amount $z'-z$.  This token rule
is the non-proportional-provide rule of~\cite{aanes2025lockswap}; it agrees
with the usual proportional rule when $p/a=q/b$ and can in general in the absense of fees and locks be simulated by first doing a swap making the provider's assets proportional and then a provide by the
proportional rule.

A \reclaim{} operation returns a previously issued token portion to the
pool.  If the portion has amount $r<z$, it is burned and the provider receives
the same fraction $r/z$ of each asset.  Equivalently,
\begin{equation}
  (a,b,z)\longmapsto
  \left(\left(1-\frac rz\right)a,
        \left(1-\frac rz\right)b,z-r\right).
  \label{eq:reclaim-replay}
\end{equation}
The condition $r<z$ leaves both asset amounts positive.  In operational terms,
\provide{} deposits assets and creates a claim on the pool, while \reclaim{}
burns such a claim and withdraws the corresponding share of the assets.

\subsection{Locks and virtual pools}
\label{sec:virtual-pool-model}

A \emph{lock} is a signed asset delta $(\alpha,\beta)$ with one positive and
one negative coordinate.  An A-to-B lock has $\alpha>0$ and $\beta<0$: executing
it sends $\alpha$ units of A to the pool and removes $-\beta$ units of B.
Canceling it leaves the pool unchanged.  A lock is \emph{active}, or
\emph{unresolved}, from the time it is granted until its holder executes or
cancels it.
We write
\[
  L\leftarrow\lockSwapAtoB(x)
  \qquad\text{or}\qquad
  L\leftarrow\lockSwapBtoA(y)
\]
for a request for an A-to-B or B-to-A lock, respectively.  The holder later
resolves $L$ by performing \execute$(L)$ or \cancel$(L)$.

The central difficulty is that later operations must be valid whichever of
the active locks are eventually executed.  A retroactive pool therefore
maintains one \emph{virtual pool} for each execute/cancel assignment to the
active locks.  If there are $k$ active locks, there are $2^k$ assignments.
Each virtual pool is the state obtained by replaying the common history with
the selected lock resolutions.  When a new A-to-B lock with input $x$ is
requested, the retroactive pool evaluates the exact output in
Equation~\eqref{eq:exact-output} for every virtual pool and grants the smallest
of these outputs.  The lock is consequently executable under every later
resolution of the earlier locks.
A \emph{trace} is a chronological record of retroactive-pool operations and
their returned values.  It is \emph{valid} if every operation and return value
obeys the preceding rules: in particular, each lock receives the minimum
output over the current virtual pools, no token portion is reclaimed twice,
and every \reclaim{} is valid in every current virtual pool.

The full histories need not be stored.  The baseline implementation
of~\cite{aanes2025lockswap} uses a compressed representation consisting of a
base state $(a_0,b_0,z_0)$ and a chronological event list beginning with the
earliest active lock.  The entries have four forms:
\begin{enumerate}
  \item an active lock, represented by its signed delta $(\alpha,\beta)$;
  \item a \provide{}$(p,q)$ whose token portion cannot yet be finalized;
  \item a \reclaim{} of a token portion of amount $r$ whose asset withdrawal
        cannot yet be finalized; or
  \item the aggregate signed delta $(\alpha,\beta)$ of adjacent locks that
        have already been resolved.
\end{enumerate}
A \provide{} or \reclaim{} is called \emph{unresolved} while a lock that was active
when the operation was requested remains unresolved.  The assets supplied by
a \provide{} can be put to work immediately, but its token amount depends on the
virtual pool in which~\eqref{eq:provide-replay} is evaluated.  Similarly,
Equation~\eqref{eq:reclaim-replay} gives different withdrawals for different
virtual token supplies.  The token portion minted by an unresolved \provide{}
is not returned to its provider until the preceding locks have been resolved
and therefore cannot be named by a \reclaim{} in the current event list.  An
unresolved \reclaim{}, by contrast, immediately surrenders its named token
portion: the portion cannot be named again, and every virtual replay subtracts
its amount from $z$ at the position of the request.  Only the asset withdrawal
and its incorporation into the compressed base state await resolution of the
preceding locks.  Every unresolved \reclaim{} therefore burns a distinct
portion already included in the base token supply $z_0$.

Given an execute/cancel assignment, the corresponding virtual pool is computed
by scanning the event list from $(a_0,b_0,z_0)$.  An executed lock adds
$(\alpha,\beta)$ and a canceled lock does nothing.  A resolved-lock aggregate
always adds its delta.  A \provide{} and a \reclaim{} use the transitions in
Equations~\eqref{eq:provide-replay} and~\eqref{eq:reclaim-replay}, respectively.
We call a compressed state \emph{reachable} if it is produced by a valid
sequence of retroactive-pool operations.  

\subsection{The quote problem}
\label{sec:quote-problem}

Fix a reachable state and an exact A-to-B lock-swap request with input $x>0$.
For a resolution $\tau$ of the active locks, let $(a_\tau,b_\tau)$ be the final
asset amount pair obtained in the corresponding virtual pool.
By~\eqref{eq:exact-output}, the output in this virtual pool is
\[
  x\frac{b_\tau}{a_\tau+x}.
\]
The worst exact quote is therefore
\begin{equation}
  \mu=\min_\tau\frac{x b_\tau}{a_\tau+x}.
  \label{eq:minimum-quote}
\end{equation}
Direct enumeration takes time exponential in the number of active locks.

For deployment, an underestimate of~\eqref{eq:minimum-quote} is more useful than
an ordinary feasible solution to this minimization problem.  We call
$\ell\geq0$ a \emph{safe quote} if $\ell\leq\mu$.  Both $\mu$ and $\ell$ are
amounts of asset B.  In the real-valued replay model, a safe quote does not
exceed the output of any current virtual pool.

For $\rho\geq1$, a safe quote is a \emph{$\rho$-approximation} if
\[
  \frac{\mu}{\rho}\leq\ell\leq\mu.
\]
This is the
reverse of the usual approximation direction for a minimization problem: a
feasible lock resolution gives an upper bound on $\mu$, whereas a safe quote
must be a lower bound.

\subsection{Results and organization}

Section~\ref{sec:hardness} closes the road to exact quotes.  It proves that
the exact quote problem is NP-hard.  A second construction quotes a later
lock in a known minimizing resolution and then removes that resolution by
canceling an earlier lock.  This amplifies the reduction's additive gap and rules
out every polynomial time computation of safe quotes within any constant factor of the exact minimum quote unless
$\mathrm{P}=\mathrm{NP}$.  
Section~\ref{sec:safe-quotes} gives algorithms for computing safe quotes.  The first maintains upper bounds on the two reserves and a lower
bound on the liquidity-token supply.  The second is an inexpensive
input-output balance quote.  Both algorithms scan the event list once, use a
fixed number of arithmetic registers, avoid square roots, and return safe
quote values.  The first is strictly positive on every reachable state.
The NP-hardness result rules out a universal constant approximation ratio, but both bounds have explicit approximation
guarantees as a function of cumulative relative load.  Our measure for that, $\eta$, adds
the total \reclaim{}-token fraction to the larger of the two cumulative relative
asset loads.  It is dimensionless, is computed from the stored state in one
pass, and directly expresses whether the operations performed after the earliest active lock-swap are light relative
to the pool.  For $\eta<1$, the product and balance quotes have
approximation factors at most
$((1+\eta)/(1-\eta))^3$ and $((1+\eta)/(1-\eta))^2$, respectively.  At total
relative load $\eta=0.001$, the guarantees evaluate to at least $99.40\%$ and
$99.60\%$ of the exact minimum quote, respectively.
Section~\ref{sec:conclusion} gives concluding remarks.

\section{Hardness of Exact and Approximate Quotes}
\label{sec:hardness}

We first show NP-hardness of the following exact minimization problem, which
captures the problem of determining the best safe quote.  We then amplify the
gap of the reduction to obtain a non-approximability result.

\medskip
\noindent\textsc{Exact Virtual Quote}:
Given a valid finite trace of retroactive-pool operations whose numerical data
are rational, a positive rational $x$, and a rational $\theta$, decide whether,
at the end of the trace, some virtual pool has an allocation $(a,b)$ for which
\(
  \frac{xb}{a+x}\leq\theta.
\)
The promise is that the input trace is valid.

\begin{theorem}
\label{thm:exact-hardness}
{\normalfont\textsc{Exact Virtual Quote}} is an NP-hard promise problem.
\end{theorem}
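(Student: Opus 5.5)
The plan is a polynomial-time reduction from \textsc{Subset Sum}: given positive integers $w_1,\dots,w_k$ and a target $T$, decide whether some subset sums to exactly $T$. Exponentially large (but polynomially encoded) rational data are allowed, so weak NP-hardness of \textsc{Subset Sum} suffices. The guiding observation is that, before any \provide{} or \reclaim{} appears in the event list, the state of a virtual pool depends on the resolution $\tau$ only through the sums of the deltas of the executed locks. Executing a lock simply adds $(\alpha_i,\beta_i)$. So the first part of the trace will encode the choice of a subset, and a later liquidity gadget will turn that sum into a quote that is \emph{non-monotone} in it.

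\textbf{Step 1 (subset encoding).} Starting from $\init(a_0,b_0)$, issue $k$ A-to-B lock-swaps with inputs $\alpha_i=\varepsilon w_i$ for a small rational $\varepsilon$. While only locks are active, Theorem~4 of Aanes \emph{et al.} applies. Hence each quote is obtained from the all-executed virtual pool, and $\beta_i$ is an explicitly computable rational. For a resolution executing the set $S$, the pool then sits at
\[
  \bigl(a_0+\varepsilon\textstyle\sum_{i\in S}w_i,\; b_0+\sum_{i\in S}\beta_i\bigr).
\]
The second coordinate is not exactly a function of $\sum_{i\in S} w_i$, because the quotes lie on a hyperbola. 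However, $\beta_i=-(b_0/a_0)\varepsilon w_i+O(\varepsilon^2)$ uniformly. Choosing $\varepsilon$ (and the scale of the $w_i$) appropriately, the pool state will equal a function of $s=\sum_{i\in S}w_i$ up to an error far smaller than the gap produced in Step~3.

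\textbf{Step 2 (non-monotone gadget).} Append a one-sided $\provide(p,0)$, followed by a \reclaim{} of a portion of amount $r$ taken from the original token $t$ of amount~$1$; that portion was minted before the locks, so it is legal. By~\eqref{eq:provide-replay}, the supply after the provide is
\[
  z'(s)=\sqrt{(a(s)+p)/a(s)},
\]
which is decreasing in $a(s)$. The reclaim then removes the fraction $r/z'(s)$ of both reserves. Executing more A-to-B locks thus has two opposing effects. It lowers $b/a$ directly, which lowers the quote $xb/(a+x)$. But it also shrinks the fraction withdrawn by the reclaim, which raises the final reserves. This tension is exactly what defeats the rule of Theorem~4. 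Tune $p,r,x$ so that the final quote $q(s)=xb(s)/(a(s)+x)$ becomes, to first order, a function with a strict unique minimum at $s=T$ and curvature bounded below. One possible refinement is to compose two gadgets, one pushing from each side, or to add a B-to-A lock family to steer $b$ independently. Then set $\theta$ slightly above $q(T)$, namely below $q(T\pm1)$ by more than the Step~1 error. Validity of the trace must be checked: all quotes must be min-quotes, and $r<z$ must hold in every virtual pool, which follows from $z'\ge1>r$. Square roots in~\eqref{eq:provide-replay} affect only token amounts, not the input data, so the trace data stay rational.

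\textbf{Step 3 (correctness and the main obstacle).} If some subset sums to $T$, its virtual pool gives a quote of at most $\theta$. Otherwise every $s$ is an integer $\neq T$, so $q(s)\ge q(T\pm1)-\text{error}>\theta$. The hard part is quantitative. The gadget's curvature at the minimum is of order $\varepsilon^2$ relative to the scale of $s$, while the hyperbola error from Step~1 is also second order. I would first try to cancel the dominant error term by pairing each A-to-B lock with a B-to-A lock of matched size, which makes the deltas exactly additive to second order. Failing that, I would compute the $O(\varepsilon^2)$ error term explicitly and absorb it into the choice of $\theta$, using the fact that it is a computable function of the subset that is dominated by $s$ when the $w_i$ are scaled to be nearly equal. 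This uses the standard trick of adding a large constant $M$ to every $w_i$ and fixing the subset cardinality.
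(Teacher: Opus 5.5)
Your outline has the same architecture as the paper: Subset Sum, then A-to-B locks whose quotes you compute via Theorem~4 of Aanes et al.\ while no liquidity event is unresolved, then an unresolved \provide{} and \reclaim{} meant to break monotonicity. However, the gadget you specify cannot work, and the alternatives you mention are never worked out. With supply $1$ and a one-sided $\provide(p,0)$, the new supply $z'=\sqrt{(a+p)/a}$ decreases as $a$ grows. So executing more A-to-B locks \emph{increases} the withdrawn fraction $r/z'$, and the opposing effect you describe has the wrong sign.

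Concretely, put $\sigma=1-r\sqrt{a/(a+p)}$. The final quote is $xb/\bigl((a+p)+x/\sigma\bigr)$, which is increasing in $b$ and, since $\sigma$ decreases in $a$, decreasing in $a$. Every resolution has at most the $a$ and at least the $b$ of the all-executed one, and this holds exactly, whatever the Step~1 error. So the all-executed resolution is always the minimizer and the instance encodes nothing.

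The same computation for a single $\provide(p,q)$ followed by a \reclaim{} shows that the reclaim factor can oppose the direct effect only through the B coordinate, and only if $q>0$. In other words, the \provide{} must deposit the output asset. The paper uses $\provide(0,3L)$ followed by $\reclaim(t^\star)$. This makes the final B reserve $f(B)=B+3L-\tfrac{4}{5}\sqrt{B(B+3L)}$, which is strictly convex with unique minimizer $B=L$ and exceeds $f(L)$ by more than $1/(100L)$ at every other integer in range.

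The reclaimed portion must also be a strict fraction of the pre-provide supply. Reclaiming all of it gives $B+Q-\sqrt{B(B+Q)}$, which is monotone by AM--GM. Carving an amount $r$ out of the portion $t$ is not an operation of the model. That is why the paper first mints a separate portion $t^\star$ (amount $4$ out of a supply of $5$) with a proportional \provide{} before the locks.

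The second gap is the one you flag yourself, and it comes from making the wrong coordinate exact. The gadget must act through $b$, yet your inputs $\alpha_i=\varepsilon w_i$ make $a$ exactly additive and leave $b$ only approximately a function of $s$. Your proposed remedies (pairing locks, fixing cardinality, tuning $p,r,x$) are never carried out, and you note that the error is of the same order as the curvature.

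The missing observation is that Theorem~4 tells you the minimizing resolution at each grant, so you can choose the \emph{input} to make the \emph{output} a prescribed integer. After the setup the allocation is $(1,K)$ with $K=L+T$ and $L=16S$. The all-earlier-executed pool then sits at $(K/B_i,B_i)$ with $B_i=K-\sum_{j<i}w_j$. The input $\delta_i=Kw_i/(B_i(B_i-w_i))$ returns exactly $w_i$, so the B reserve before the \provide{} is exactly $K-W$ in every resolution.

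The A reserve then depends on the subset, but the $\delta_i$ telescope to give $a<2$ everywhere. Taking $x=10^6L^2$ brings $x/(a+x)$ close enough to $1$ that the additive gap survives with $\theta=12L/5+1/(200L)$. With these two ingredients no error analysis is needed. Without them, your Step~3 has neither a non-monotone quote nor a quantified gap to compare against.
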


The source problem is positive-integer \textsc{Subset Sum}: given positive integers
$w_1,\ldots,w_n$ and a positive integer target $T$, decide whether there is a
set $I\subseteq\{1,\ldots,n\}$ such that
$\sum_{i\in I}w_i=T$~\cite{gareyjohnson1979}.
The cases outside $0<T<\sum_iw_i$ are decided immediately and can be mapped
to fixed yes- and no-instances, so this restricted source problem remains
NP-hard.

\begin{lemma}
\label{lem:reachable-encoding}
Let $w_1,\ldots,w_n$ and $T$ be positive integers satisfying
$0<T<S:=\sum_iw_i$, and set
\[
  L=16S,\qquad K=L+T,\qquad Q=3L.
\]
One can construct, in polynomial time, a valid trace that begins with
\[
  \init(1/5,K/5),\qquad
  t^\star\leftarrow\provide(4/5,4K/5).
\]
After these operations the allocation is $(1,K)$, the token supply is $5$,
and $t^\star$ has amount $4$.  For
\[
  P_i=\sum_{j<i}w_j,\qquad
  B_i=K-P_i,\qquad
  \delta_i=\frac{K w_i}{B_i(B_i-w_i)},
\]
the trace next requests one active A-to-B lock $X_i$ with delta
$(\delta_i,-w_i)$ for each $i\in\{1,\ldots,n\}$.  The trace ends with
an unresolved \provide$(0,Q)$ and an unresolved \reclaim$(t^\star)$.
For every subset $I\subseteq\{1,\ldots,n\}$, there is a resolution that
executes precisely the locks $X_i$ with $i\in I$.  If
\[
  W=\sum_{i\in I}w_i,
  \qquad
  B=L+T-W,
\]
then the resulting final allocation $(a,b)$ has $0<a<2$ and
\begin{equation}
  b=f(B):=B+3L-\frac45\sqrt{B(B+3L)}.
  \label{eq:target-function}
\end{equation}
\end{lemma}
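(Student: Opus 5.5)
The plan is to build the trace explicitly and check two things along the way: that it is \emph{valid} (every lock gets exactly the minimum output over the current virtual pools, and the final \reclaim{} is legal in every virtual pool), and that the final allocation has the stated form. All numbers in the trace are rational with bit length polynomial in the input, since $\delta_i$ is a ratio of products of $K$, $w_i$ and $B_i$. The trace is therefore constructible in polynomial time.

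\textbf{Step 1: the prefix.} By~\eqref{eq:provide-replay}, after $\init(1/5,K/5)$ and $\provide(4/5,4K/5)$ the allocation is $(1,K)$. The supply is
\[
z'=\sqrt{\frac{1\cdot K}{(1/5)(K/5)}}=5,
\]
so $t^\star$ has amount $4$.

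\textbf{Step 2: validity of the locks (main obstacle).} When $X_i$ is requested, the event list contains only the active A-to-B locks $X_1,\dots,X_{i-1}$, and no unresolved \provide{} or \reclaim{}. Theorem~4 of Aanes \emph{et al.} therefore applies: the minimizing virtual pool for an A-to-B request is the one executing all earlier A-to-B locks.

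I will prove by induction on $i$ that this all-executed pool has allocation $(K/B_i,\,B_i)$, so its product is $K$. The inductive step is the computation that, with $a=K/B_i$, solving $(a+\delta)(B_i-w_i)=K$ gives exactly
\[
\delta=\frac{K}{B_i-w_i}-\frac{K}{B_i}=\delta_i .
\]
Hence the exact output for input $\delta_i$ in that pool is $w_i$. This equals the minimum over virtual pools, so granting $(\delta_i,-w_i)$ is valid, and the all-executed pool moves to $(K/B_{i+1},B_{i+1})$, closing the induction.

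The delicate point is citing Theorem~4 correctly. As a fallback, I can verify the minimum directly. Executing a subset $J$ of earlier locks gives $b=K-\sum_{j\in J}w_j\ge B_i$. The exactness of each executed swap in its own minimizing pool, together with the monotonicity of $xb/(a+x)$ (increasing in $b$), shows that no pool does worse than the all-executed one.

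\textbf{Step 3: bounds on the A-reserve.} For an arbitrary subset $I$, the pool after the locks is
\[
\Bigl(1+\sum_{i\in I}\delta_i,\;K-W\Bigr)=\Bigl(1+\sum_{i\in I}\delta_i,\;B\Bigr).
\]
Since $B_i-w_i\ge K-S\ge \tfrac{15}{16}K$, we get
\[
\delta_i\le \frac{Kw_i}{(15K/16)^2}, \qquad\text{so}\qquad \sum_i\delta_i\le \Bigl(\frac{16}{15}\Bigr)^2\frac{S}{K}<1 .
\]
Thus the A-reserve lies in $(1,2)$. The supply is still $5$, because locks do not touch $z$.

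\textbf{Step 4: the final \provide{} and \reclaim{}.} The \provide$(0,Q)$ is always valid. By~\eqref{eq:provide-replay} it yields the supply
\[
z'=5\sqrt{\frac{B+Q}{B}}>5>4,
\]
so \reclaim$(t^\star)$ satisfies $r=4<z'$ in every virtual pool. The portion $t^\star$ is named only once, and it belongs to the base supply, as the compressed representation requires.

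By~\eqref{eq:reclaim-replay} the final B-reserve is
\[
(B+Q)\Bigl(1-\frac{4}{z'}\Bigr)=B+Q-\frac45\sqrt{B(B+Q)}.
\]
With $Q=3L$ this is $f(B)$. The final A-reserve is the pre-reclaim A-reserve (which lies in $(1,2)$) multiplied by $1-4/z'\in(0,1)$, hence lies in $(0,2)$. Since every subset $I$ is realized by the resolution executing exactly $\{X_i: i\in I\}$, this completes the lemma.
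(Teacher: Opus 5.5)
Your proposal is correct and follows essentially the same route as the paper: Theorem~4 of Aanes et al.\ with the induction on $(K/B_i,B_i)$, then the provide/reclaim scaling that yields $f(B)$. The differences are cosmetic: you bound $\sum_i\delta_i$ termwise by $(16/15)^2S/K$ instead of using the paper's telescoping identity $1+\sum_i\delta_i=K/(K-S)$ (which your own formula $\delta_i=K/B_{i+1}-K/B_i$ already gives), and you check $4<z'$ explicitly in every virtual pool. Two small fixes: your Step~2 fallback should rest on $a_J\le K/B_i$ (all $\delta_j>0$) together with $xb/(a+x)$ decreasing in $a$, not on the exactness of earlier swaps; and the pre-reclaim A-reserve lies in $[1,2)$ rather than $(1,2)$, since $I=\emptyset$ gives exactly $1$.
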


\begin{proof}
Initialize a retroactive pool by
\[
  \init(1/5,K/5)
\]
and then perform the proportional addition
\[
  \provide(4/5,4K/5).
\]
The second operation mints a liquidity token portion $t^\star$ of amount $4$.
The resulting asset amount pair is $(1,K)$, and the number of minted and
unburnt liquidity tokens is $5$.

When $X_i$ is requested, all preceding active locks are A-to-B, and
there is no unresolved \provide{} or \reclaim{}.  Theorem~4
of~\cite{aanes2025lockswap} states that, under this condition, the minimum
output for a new A-to-B lock is attained by the resolution that executes every
active A-to-B lock and cancels every active B-to-A lock.  Hence, in the present
construction, the minimum is attained by executing every lock $X_j$ with
$j<i$.
By induction, the allocation under this resolution is
\[
  \left(\frac K{B_i},B_i\right).
\]
Request
\[
  X_i\leftarrow\lockSwapAtoB(\delta_i).
\]
The exact output under the minimizing resolution is
\[
  \frac{B_i\delta_i}{K/B_i+\delta_i}=w_i.
\]
The lock returned by the operation therefore has asset delta
$(\delta_i,-w_i)$.

After requesting the locks $X_1,\ldots,X_n$, perform
\[
  \provide(0,Q)
  \qquad\text{and}\qquad
  \reclaim(t^\star).
\]
The portion $t^\star$ was issued before the locks $X_i$, and its amount is
$4<5$ when the \reclaim{} is requested.  Both operations are valid.  Since the
locks $X_i$ remain active, the \provide{} and \reclaim{} are unresolved at the
end of the trace and are replayed in every virtual pool.  Thus every subset
$I$ occurs by executing exactly the locks $X_i$ with $i\in I$.

Fix an execute/cancel assignment and let
\[
  W=\sum_{i\text{ executed}}w_i.
\]
Immediately before the B-only \provide{}, the virtual B amount is
\[
  B=K-W=L+T-W.
\]
The corresponding A amount is positive and uniformly bounded.  Indeed, the
all-executed increments telescope, and every assignment satisfies
\[
  a\leq \frac K{K-S}\leq\frac{17}{15}<2.
\]

The B-only \provide{} multiplies the number of liquidity tokens by
\[
  \sqrt{\frac{B+Q}{B}}.
\]
Burning $t^\star$ then scales both asset amounts by
\[
  1-\frac45\sqrt{\frac{B}{B+Q}}.
\]
Consequently the final B amount is $f(B)$.  All event parameters are rational
and have polynomial bit length, so the construction runs in polynomial time.
\end{proof}

\begin{lemma}
\label{lem:subset-sum-gap}
With $S,L$, and $f$ as in Lemma~\ref{lem:reachable-encoding}, $f$ is uniquely
minimized at $B=L$.  For every integer
$B\in[L-S,L+S]$ with $B\ne L$,
\[
  f(B)-f(L)>\frac1{100L}.
\]
\end{lemma}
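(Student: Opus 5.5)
The plan is to treat $f$ as a smooth, strictly convex function of $B>0$, locate its critical point, and then turn a lower bound on $f''$ over the interval $[L-S,L+S]$ into the claimed quadratic gap by Taylor's theorem. Throughout, write $g(B)=\sqrt{B(B+3L)}$, so that $f(B)=B+3L-\tfrac45 g(B)$.

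\emph{Step 1 (critical point).} Differentiate to get
\[
  f'(B)=1-\frac45\cdot\frac{2B+3L}{2g(B)}.
\]
At $B=L$ we have $g(L)=\sqrt{4L^2}=2L$ and $2B+3L=5L$. Hence $f'(L)=1-\tfrac45\cdot\tfrac{5L}{4L}=0$.

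\emph{Step 2 (strict convexity and unique minimizer).} Differentiating again gives the closed form
\[
  g''(B)=-\frac{9L^2}{4\,(B(B+3L))^{3/2}},
  \qquad
  f''(B)=\frac{9L^2}{5\,(B(B+3L))^{3/2}}>0 .
\]
The sign of $g''$ can be checked directly, or read off from the fact that a geometric mean of two positive linear functions is concave. So $f$ is strictly convex on $(0,\infty)$. A strictly convex differentiable function has at most one critical point, and that point is its unique global minimizer. By Step 1 this point is $B=L$.

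\emph{Step 3 (quantitative gap).} Take an integer $B\in[L-S,L+S]$ with $B\neq L$, so $|B-L|\geq1$. Since $f'(L)=0$, Taylor's theorem with Lagrange remainder gives
\[
  f(B)-f(L)=\tfrac12 f''(\xi)(B-L)^2
\]
for some $\xi$ between $L$ and $B$. It therefore suffices to bound $f''$ from below on $[L-S,L+S]$.

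On this interval we use $S=L/16$:
\[
  \xi\le \tfrac{17}{16}L,\qquad \xi+3L\le\tfrac{65}{16}L,
\]
so
\[
  \xi(\xi+3L)\le\tfrac{1105}{256}L^2<4.4L^2,
  \qquad
  (\xi(\xi+3L))^{3/2}<(4.4)^{3/2}L^3<9.3L^3 .
\]
Hence $f''(\xi)>\tfrac{9}{5\cdot 9.3\,L}>\tfrac{0.19}{L}$. This yields
\[
  f(B)-f(L)>\frac{0.09}{L}>\frac1{100L}.
\]

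The only step needing care is this numeric bound on $(B(B+3L))^{3/2}$ over the interval. The constants leave roughly a factor of $9$ of slack, so crude estimates suffice, and no step is expected to be a real obstacle.
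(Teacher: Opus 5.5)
Your proof is correct and follows the paper's argument essentially step for step: you compute $f'$, check $f'(L)=0$, use the closed form $f''(B)=\frac{9L^2}{5(B(B+3L))^{3/2}}>0$ for strict convexity and uniqueness of the minimizer, and then apply Taylor's theorem with a lower bound on $f''$ over the interval. The only difference is the interval estimate: you use $B\le \tfrac{17}{16}L$ to get $f''>0.19/L$, whereas the paper uses the cruder $B\le 2L$ to get $f''>\tfrac{9}{160L}$, which also suffices.
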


\begin{proof}
Differentiation gives
\[
  f'(B)=1-\frac45\frac{2B+3L}{2\sqrt{B(B+3L)}}
\]
and
\[
  f''(B)=\frac45\frac{(3L)^2}{4(B(B+3L))^{3/2}}>0.
\]
Since $f'(L)=0$, strict convexity makes $L$ the unique minimizer, with
\[
  f(L)=\frac{12L}{5}.
\]
The relevant interval lies in $(0,2L]$, so $B+3L\leq5L$ and
$(B(B+3L))^{3/2}<32L^3$.  Hence
\[
  f''(B)>\frac{9}{160L}
\]
throughout the interval.  Taylor's theorem and $|B-L|\geq1$ now give
\[
  f(B)-f(L)
  \geq\frac12\frac9{160L}(B-L)^2
  >\frac1{100L}.
\]
\end{proof}

\begin{proof}[Proof of Theorem~\ref{thm:exact-hardness}]
Apply Lemma~\ref{lem:reachable-encoding} to the given \textsc{Subset Sum}
instance.  Set
\[
  \Delta=\frac1{100L},
  \qquad
  x=10^6L^2,
  \qquad
  \theta=\frac{12L}{5}+\frac{\Delta}{2}.
\]
If some subset sums to $T$, its lock assignment has $B=L$.  By
Lemmas~\ref{lem:reachable-encoding} and~\ref{lem:subset-sum-gap}, its final B
amount is $12L/5$, and its quote value is
\[
  \frac{x(12L/5)}{a+x}<\frac{12L}{5}<\theta.
\]

Suppose that no subset sums to $T$.  Every assignment then has $B\ne L$, so
Lemmas~\ref{lem:reachable-encoding} and~\ref{lem:subset-sum-gap} give a final
B amount of at least $12L/5+\Delta$ and a final A amount less than $2$.  Its
quote value is therefore at least
\[
  \frac{x(12L/5+\Delta)}{x+2}.
\]
This quantity exceeds $\theta$.  It suffices to verify
\[
  \frac{x\Delta}{2}
  >2\left(\frac{12L}{5}+\frac\Delta2\right).
\]
The left side is $5000L$, while the right side is below $5L$ for $L\geq1$.

The quote instance is consequently a yes-instance exactly when the original
\textsc{Subset Sum} instance is a yes-instance.
Lemma~\ref{lem:reachable-encoding} and the definitions of $x$ and $\theta$
show that the reduction runs in polynomial time.
\end{proof}

The traces constructed above and their compressed endpoints both
have polynomial size.  Hence the hardness result also holds when the input is
a compressed state promised to be reachable, as in an implementation.

The additive separation in the preceding reduction is only $\Theta(1/L)$
against a quote-value scale of $\Theta(L)$.  We now show how a later exact
lock can turn this small additive separation into an arbitrarily large
multiplicative one.  The construction uses a known minimizing resolution to
determine that lock's output and then cancels the earlier lock needed for that
resolution.

\begin{theorem}
\label{thm:no-constant-factor}
Fix $\rho>1$.  If there is a polynomial-time algorithm that takes as input a
valid trace whose numerical data are rational and a new exact lock-swap
request to be made at the end of that trace, and always returns a
$\rho$-approximate safe quote for that request, then
$\mathrm{P}=\mathrm{NP}$.
\end{theorem}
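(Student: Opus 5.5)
The plan is to bootstrap the additive gap of Theorem~\ref{thm:exact-hardness} into a multiplicative one, as announced before the statement. We fix a known resolution whose output we can compute, quote a large lock $Y$ against it, and then cancel an auxiliary earlier lock $D$ so that this resolution disappears. What survives is a residual B amount that is tiny in the yes-case and comparatively large in the no-case.

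Concretely, I would extend the trace of Lemma~\ref{lem:reachable-encoding} by an extra decoy lock $D$, requested right after the initial \provide{} and before $X_1,\ldots,X_n$. Its delta is designed so that, in every resolution executing $D$, the minimum over $W$ of the exact output for $Y$ is attained at an explicitly computable resolution (say $W=0$, or all $X_i$ executed). The non-$D$ pools reproduce $b=f(B)$ shifted by a known amount, with unique minimum $f(L)$ at $W=T$ and a gap $\Delta=1/(100L)$ by Lemma~\ref{lem:subset-sum-gap}. The natural lever is the A coordinate: an A-to-B lock $D$ with large $\alpha$ raises $a$ in $D$-executed pools. The $Y$-output $x_Yb/(a+x_Y)$ there is then slightly smaller than $x_Yf(L)/(a+x_Y)$ in the best non-$D$ pool, by a margin $\varepsilon\ll\Delta$ that we can tune.

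Because $Y$'s output is fixed by the decoy pool, its quote $y$ is a number the reduction knows in closed form, computed exactly as in the proof of Lemma~\ref{lem:reachable-encoding}, where Theorem~4 of~\cite{aanes2025lockswap} was not needed to evaluate the minimum.

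After requesting $Y$ (A-to-B, huge input $x_Y$, polynomial bit length), the trace executes $Y$ and cancels $D$. Both are valid holder actions, and executing $Y$ applies its delta in every remaining virtual pool. Pools with $D$ canceled then have final B amount $b-y$. In the yes-case some pool has $b-y=O(\varepsilon)$; in the no-case every pool has $b-y\geq\Delta-O(\varepsilon)$, while all A amounts remain bounded above and below by explicit constants.

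A final A-to-B request with a moderate input then has exact minimum $\mu_{\mathrm{yes}}\leq c_1\varepsilon$ in the yes-case and $\mu_{\mathrm{no}}\geq c_2\Delta$ in the no-case. Choosing $\varepsilon=c_2\Delta/(2\rho c_1)$, still of polynomial bit length for fixed $\rho$, gives $\mu_{\mathrm{no}}>\rho\,\mu_{\mathrm{yes}}$. A $\rho$-approximate safe quote $\ell$ then satisfies $\ell\leq c_1\varepsilon$ in the yes-case and $\ell\geq\mu_{\mathrm{no}}/\rho>c_1\varepsilon$ in the no-case. Comparing $\ell$ against the threshold $c_1\varepsilon$ therefore decides \textsc{Subset Sum}.

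The step I expect to be hardest is designing $D$ so that three things hold simultaneously. First, the $D$-executed minimum is exactly computable and strictly below every non-$D$ output by a controlled $\varepsilon$. Second, no $D$-executed pool undercuts it, so that $y$ is known. Third, validity is preserved: $D$, $X_i$, the \provide{}/\reclaim{}, $Y$, and all reserves stay positive in all $2^{n+1}$ pools, and $D$ itself is quoted correctly when requested. To get the second property I would first try making $D$'s effect on $a$ dominant with $x_Y\approx a$, so that the $Y$-output depends strongly on $a$ and only weakly on $W$. I would then verify monotonicity in $W$ for $D$-executed pools by the same convexity bounds used in Lemma~\ref{lem:subset-sum-gap}.
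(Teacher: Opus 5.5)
Your architecture matches the paper's: a known resolution fixes the quote of a large lock, and an earlier lock is then canceled to remove that resolution. Executing $Y$ instead of leaving it active is a harmless variation.

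\textbf{Gap 1: the decoy lock $D$ is never constructed, and the design you suggest fails.} The whole proof rests on designing $D$, and you leave it open. Every final B reserve is $f(B)\ge f(L)$, so no pool can have a smaller B reserve than the yes-case pools. The decoy resolution can therefore fix $Y$'s quote only if two things hold. First, its own final B reserve must be exactly $f(L)$; this is also what keeps $y$ rational, since at other $B$ the replay involves $\sqrt{B(B+3L)}$, while the trace must have rational data. Second, its A reserve must strictly exceed that of every resolution executing a subset of weight $T$. An A-to-B decoy requested before the $X_i$ cannot achieve this, for three reasons.
\begin{itemize}
\item It is executed in the minimizer of Theorem~4 of Aanes \emph{et al.}, so the $X_i$ must be requoted with new inputs $\delta_i'$ to keep outputs $w_i$.
\item This forces $D$'s output below $K-S$, i.e.\ $1+\alpha<K/S<17$, so $\alpha$ cannot be large.
\item The only resolution with $B=L$ that is known in general is $W=0$ with $D$'s output exactly $T$, i.e.\ $\alpha=T/L$. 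But then the $X_i$ buy the same $T$ units of B at worse prices: $\sum_{i\in I}\delta_i'\ge TK/L^2>T/L=\alpha$. Every yes resolution then has the larger A reserve and undercuts the decoy.
\end{itemize}
So $y$ depends on the unknown subset, and the reduction cannot even write down a valid trace.

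\textbf{Gap 2: you conflate the tie-break margin with the residual.} In the yes case, $f(L)-y\ge f(L)a_I/(a_I+x_Y)$ however the decoy is tuned. So the residual is below $\Delta/\rho$ only if $x_Y$ is of order $\rho L^2$ or larger. Your fallback $x_Y\approx a$ leaves residuals of order $L$ in both cases, which destroys the amplification.

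\textbf{The missing idea is the paper's calibration lock.} It is a B-to-A lock with B input $S-T$ and A output $c$, requested before the $X_i$.
\begin{itemize}
\item Because it is B-to-A, it is canceled in Theorem~4's minimizer, so the $X_i$ quotes are unchanged.
\item Executing it together with \emph{all} $X_i$ lands exactly at $B=L$, giving rational final reserves $(a_c,12L/5)$.
\item This resolution has a strictly larger A reserve than any yes resolution, because $\sum_{i\notin I}\delta_i>(S-T)/K>c$.
\item Then $H=2000gL^2$ makes it the unique minimizer.
\end{itemize}

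\textbf{Your A-to-B idea can be rescued as the mirror image.} Request $D$ \emph{after} the $X_i$, with output exactly $T$ and designated resolution $W=0$. Then $\alpha=TK/((K-S)(L-S))>TK/(K-S)^2\ge\sum_{i\in I}\delta_i$, and the decoy wins the tie-break.
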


\begin{proof}
Use the \textsc{Subset Sum} notation and the initial operations stated in
Lemma~\ref{lem:reachable-encoding}.  Thus $L=16S$, $K=L+T$, the current
allocation is $(1,K)$, the token supply is $5$, and the token portion $t^\star$ of
amount $4$ is available.  Put
\[
  D=S-T.
\]
Before requesting any lock $X_i$, request a B-to-A lock with B input $D$.  Its
exact A output in the unique current pool is
\[
  c=\frac{D}{K+D},
\]
so this calibration lock $C$ has delta $(-c,D)$.

Next request the locks $X_i$ with the inputs $\delta_i$ defined in
Lemma~\ref{lem:reachable-encoding}.  No \provide{} or \reclaim{} is unresolved
at this point.  For each new A-to-B quote,
Theorem~4 of~\cite{aanes2025lockswap} identifies the minimizing resolution:
cancel the active B-to-A calibration lock $C$ and execute every A-to-B lock
$X_j$ with $j<i$.  Thus the quote calculations are the same as in
Lemma~\ref{lem:reachable-encoding}, and the returned deltas are again
$(\delta_i,-w_i)$.  Now perform
\[
  \provide(0,3L),\qquad \reclaim(t^\star).
\]
These are the valid unresolved operations from
Lemma~\ref{lem:reachable-encoding}.

Let $\chi\in\{0,1\}$ record whether $C$ executes, and let $W$ be the total
weight of the executed locks $X_i$.  Immediately before the \provide{}, the B
reserve is
\[
  B=K+\chi D-W.
\]
This is an integer in $[L-S,L+S]$, since $0\leq W\leq S$ and
$0<D<S$.
The final B reserve is consequently $f(B)$, where $f$ is defined in
Lemma~\ref{lem:reachable-encoding}.  If $C$ and all locks $X_i$ execute, then
$B=K+D-S=L$.
This known calibration resolution has final reserves
\[
  a_c=\frac35\left(\frac{K}{K-S}-c\right),
  \qquad
  b_* = f(L)=\frac{12L}{5}.
\]
Both values are rational because the \provide{} multiplier at $B=L$ is $2$ and
the \reclaim{} survival factor is $3/5$.

Any other resolution with final B reserve $b_*$ must cancel $C$ and execute
the locks indexed by a set $I$ for which $\sum_{i\in I}w_i=T$.  Its A reserve is
\[
  a_I=\frac35\left(1+\sum_{i\in I}\delta_i\right)<a_c.
\]
To see the strict inequality, observe that
\[
  \delta_i=\frac{K w_i}{B_i(B_i-w_i)}>\frac{w_i}{K}
  \quad\text{and}\quad
  c=\frac{D}{K+D}<\frac{D}{K}.
\]
The indices outside $I$ have total weight $D$, and hence
\[
  \frac{K}{K-S}-c-\left(1+\sum_{i\in I}\delta_i\right)
  =\sum_{i\notin I}\delta_i-c>0.
\]
Every resolution has final A reserve less than $2$.  Every resolution whose B reserve
before the \provide{} differs from $L$ has final B reserve greater than
$b_*+\Delta$, where
\[
  \Delta=\frac1{100L}.
\]

Since $\rho$ is fixed, hard-wire an integer $g\geq\rho$ into the reduction
and set
\[
  H=2000gL^2.
\]
Request an A-to-B lock with input $H$.  The exact output from allocation
$(a,b)$ is
\[
  G_H(a,b)=\frac{bH}{a+H}.
\]
The calibration resolution uniquely minimizes $G_H$.  Among resolutions with
$b=b_*$, it has the largest A reserve, and $G_H(a,b_*)$ decreases with $a$.
Every resolution whose B reserve before the \provide{} differs from $L$ gives
\[
  G_H(a,b)
  >\frac{(b_*+\Delta)H}{H+2}
  >b_*
  >G_H(a_c,b_*),
\]
where the middle inequality follows from
$H\Delta=20gL>2b_*=24L/5$.  The new amplifier lock therefore has the known
rational delta
\[
  (H,-v),
  \qquad
  v=\frac{b_*H}{a_c+H}.
\]

Cancel the earlier calibration lock $C$ while leaving the locks $X_i$ and the
amplifier active.  The retroactive-pool operation removes the virtual pools in
which $C$ executes; it does not change the delta of the later lock.  Since a
lock $X_i$ still precedes the \provide{} and \reclaim{}, these liquidity events
remain unresolved.  The resulting valid trace therefore ends in a state from
which the resolution that determined $v$ is absent.

Request an A-to-B lock with input $x=1$ in this state.  Write
\[
  \varepsilon=b_*-v=\frac{b_*a_c}{a_c+H}.
\]
Every remaining resolution has B reserve at least $b_*$.  If the amplifier is
canceled, its quote value is greater than $b_*/3=4L/5$.  If the amplifier
executes, its quote value is less than $5L/H$, since
$f(B)<B+3L<5L$.  The choice of $H$ makes $5L/H<4L/5$, so every minimizing
resolution executes it.

If some subset sums to $T$, that subset has B reserve $b_*$ before the
amplifier and gives
\[
  \mu<\frac{\varepsilon}{H}<\frac{2b_*}{H^2}=:Y.
\]
If no subset sums to $T$, every resolution has B reserve greater than
$b_*+\Delta$ before the amplifier.  Since its A reserve is less than $2$,
\[
  \mu>\frac{\Delta}{H+3}=:N.
\]
The two rational bounds satisfy
\[
  \frac{N}{Y}
  =\frac{\Delta H^2}{2b_*(H+3)}
  >\frac{\Delta H}{4b_*}
  =\frac{25g}{12}
  >g.
\]

Run the assumed safe approximation and compare its rational output $\ell$
with $Y$.  In a yes-instance, $\ell\leq\mu<Y$.  In a no-instance,
$\ell\geq\mu/\rho>N/\rho>Y$.  The comparison decides \textsc{Subset Sum}.

Every numerical value recorded in the constructed trace is rational with
polynomial binary length.  Square roots occur in non-calibration virtual
replays, but they are neither recorded in the trace nor evaluated by the
reduction.  Lemma~\ref{lem:subset-sum-gap} supplies the required comparison:
if $B\ne L$, then $f(B)>b_*+\Delta$.
Although $H$ is numerically large, its binary representation has
$O(\log g+\log L)$ bits.  Thus the reduction runs in polynomial time in the
standard Turing machine model.
\end{proof}

\section{Efficiently Computable Safe Quotes}
\label{sec:safe-quotes}

We now give the constructive results.  The first quote follows from a product
invariant and three scalar bounds.  The second is an input-output balance
quote that costs two additional accumulators.

Write $(\alpha,\beta)$ for the asset delta of a lock or a resolved-lock
aggregate, $(p,q)$ for the nonnegative amounts in a \provide{}, and $r$ for the
amount of a liquidity token portion burned by a \reclaim{}.  Let $m$ be the
event-list length.  Each quote computation uses $O(m)$ arithmetic
operations and a fixed number of scalar registers.

\subsection{A universal product-floor quote}

For the reachable compressed state given as input, write $(a_0,b_0,z_0)$ for
its base state: $a_0$ and $b_0$ are the base asset amounts, and $z_0$ is the
base amount of minted and unburnt liquidity tokens.
For an allocation $(a,b)$ with $z$ minted and unburnt liquidity tokens, define
\[
  \gamma=\frac{z}{\sqrt{ab}},
  \qquad
  \gamma_0=\frac{z_0}{\sqrt{a_0b_0}}.
\]

\begin{lemma}
\label{lem:invariant}
At every prefix of every virtual replay of a reachable compressed state,
\[
  \gamma\leq\gamma_0.
\]
\end{lemma}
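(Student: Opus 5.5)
The plan is to prove monotonicity: starting a virtual replay at the base state, where $\gamma=\gamma_0$ by definition, we check that no event type in the event list can increase $\gamma=z/\sqrt{ab}$. Induction over the prefix length then gives $\gamma\leq\gamma_0$ at every prefix. The event types are executed locks, resolved-lock aggregates, canceled locks, unresolved \provides{}, and unresolved \reclaims{}.

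The liquidity events are straightforward. A canceled lock changes nothing. For a \provide$(p,q)$, Equation~\eqref{eq:provide-replay} gives $z'=z\sqrt{(a+p)(b+q)/(ab)}$, so $z'/\sqrt{a'b'}=z/\sqrt{ab}$ and $\gamma$ is unchanged. For a \reclaim{} of amount $r<z$, Equation~\eqref{eq:reclaim-replay} scales $a$ and $b$ by $1-r/z$ and maps $z$ to $z-r=z(1-r/z)$. Hence $\sqrt{ab}$ scales by the same factor $1-r/z$, and $\gamma$ is again unchanged. So $\gamma$ can move only at lock deltas, and it decreases exactly when the product $ab$ increases.

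The main step, and the expected obstacle, is to show that applying a lock delta $(\alpha,\beta)$, or an aggregate of such deltas, never decreases $ab$ in the virtual pool where it is applied. Locally this is false: an A-to-B lock quoted as the \emph{minimum} output over virtual pools gives, in any particular virtual pool $(a,b)$, an output $-\beta\leq b\alpha/(a+\alpha)$, so $(a+\alpha)(b+\beta)\geq ab$. The difficulty is that this inequality holds against the virtual pools existing \emph{at the time the lock was granted}, while the replay we are examining may differ from those pools. It can differ because of later unresolved liquidity events, and because events preceding the lock may have been compressed into the base state or into resolved aggregates.

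I would resolve this as follows. Take the virtual pool, in the state current when the lock was requested, that matches the present replay on every earlier lock. The quote guarantees that at the lock's position in that replay the product does not decrease. Liquidity events before the lock are replayed identically, because they were already in the history. So the state at the lock's position in our replay is literally one of the virtual pools over which the minimum was taken. For resolved-lock aggregates, I would argue that each constituent lock individually did not decrease the product in the actually realized history. Since an aggregate is applied only as adjacent resolved locks whose histories were fixed, its effect coincides with sequentially applying those locks, and so it too does not decrease $ab$. This requires a small invariant on reachable compressed states: each stored lock delta is product-nondecreasing from every virtual state reachable at its position, and this property is preserved by each retroactive-pool operation, including the compression that folds events into the base state. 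I would prove it by induction over the valid operation sequence producing the state. This bookkeeping about compression is where I expect most care to be needed. Combining the three cases gives $\gamma\leq\gamma_0$ at every prefix.
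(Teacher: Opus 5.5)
Your proposal is correct and follows essentially the same route as the paper: \provide{} and \reclaim{} leave $\gamma$ unchanged, each executed lock satisfies $(a+u)(b-v)\geq ab$ because the replay prefix just before it is one of the virtual pools over which its quote was minimized, and resolved aggregates are treated as sequences of such executions. The paper condenses your compression bookkeeping into that single observation about prefixes; note also that your phrase ``Locally this is false'' should read ``true'' (or ``immediate''), since the sentence that follows establishes exactly that local inequality.
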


\begin{proof}
A \provide{} $(p,q)$ multiplies both $z$ and $\sqrt{ab}$ by
\[
  \sqrt{(1+p/a)(1+q/b)},
\]
and a \reclaim{} multiplies both by $1-r/z$.  These operations preserve $\gamma$.

Consider an A-to-B lock with delta $(u,-v)$, where $u,v>0$.  When the lock is
granted, its output satisfies, for every virtual allocation to which it may
later be applied,
\[
  v\leq\frac{bu}{a+u}.
\]
Indeed, the prefix of a later replay immediately before this lock is one of
the virtual pools considered when the lock was granted.
Equivalently,
\[
  (a+u)(b-v)\geq ab.
\]
Execution cannot decrease $\sqrt{ab}$, while cancellation leaves it
unchanged; neither operation changes $z$.  The B-to-A case is symmetric.
An aggregate of resolved locks represents a sequence of such executions.
Induction over the event list proves the claim.
\end{proof}

The algorithm processes the given state's entire stored event list in
chronological order while maintaining reserve upper bounds $A^+,B^+$ and a
token lower bound $Z^-$.  Initialize
\[
  A^+=a_0,\qquad B^+=b_0,\qquad Z^-=z_0.
\]
For a delta entry $(\alpha,\beta)$, set
\[
  A^+\leftarrow A^++\max\{0,\alpha\},
  \qquad
  B^+\leftarrow B^++\max\{0,\beta\}.
\]
For a \provide{} $(p,q)$, first compute
\[
  \xi=\frac p{A^+}+\frac q{B^+}+\frac{pq}{A^+B^+},
  \qquad
  F^-=1+\frac{\xi}{2+\xi}.
\]
and then set
\[
  Z^-\leftarrow Z^-F^- ,\qquad
  A^+\leftarrow A^++p,\qquad
  B^+\leftarrow B^++q.
\]
For a \reclaim{} of amount $r$, set $Z^-\leftarrow Z^--r$.

\begin{theorem}
\label{thm:product-quote}
For every reachable compressed state and every exact A-to-B lock-swap request
with input $x>0$, let $A^+,B^+,Z^-$ be the terminal register values produced
by the algorithm above.  Then
\begin{equation}
  \ell_{\mathrm{prod}}
  =
  \frac{x(Z^-/z_0)^2a_0b_0}
       {A^+(A^++x)}
  \label{eq:product-quote}
\end{equation}
is a strictly positive safe quote.
\end{theorem}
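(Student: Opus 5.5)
Fix an arbitrary resolution $\tau$ with final virtual state $(a_\tau,b_\tau,z_\tau)$. I will establish three pointwise inequalities along the replay: $a\le A^+$, $b\le B^+$, and $z\ge Z^-$. They give the claim as follows. Lemma~\ref{lem:invariant} at the end of the replay gives $z_\tau/\sqrt{a_\tau b_\tau}\le\gamma_0$, that is,
\[
  a_\tau b_\tau\geq \frac{z_\tau^2 a_0b_0}{z_0^2}\geq \frac{(Z^-)^2a_0b_0}{z_0^2}.
\]
Hence
\[
  \frac{xb_\tau}{a_\tau+x}=\frac{x\,a_\tau b_\tau}{a_\tau(a_\tau+x)}\ge\frac{x(Z^-/z_0)^2a_0b_0}{A^+(A^++x)},
\]
where the last step uses $a_\tau\le A^+$ and the fact that $a\mapsto a(a+x)$ is increasing on $a>0$. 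Minimizing over $\tau$ yields $\ell_{\mathrm{prod}}\le\mu$.

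\textbf{Step 1: the upper bounds.} I prove $a\le A^+$ and $b\le B^+$ by induction over the event list. A delta entry adds at most $\max\{0,\alpha\}$ to $a$, whether the lock executes or is canceled, and the same holds for $b$. A \provide{} adds exactly $p$ and $q$. A \reclaim{} multiplies both reserves by $1-r/z\in(0,1)$ and so only decreases them.

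\textbf{Step 2: the token lower bound.} This is the step I expect to be the main obstacle. For a \provide{}, the true factor is
\[
  \sqrt{(1+p/a)(1+q/b)}=\sqrt{1+\xi(a,b)},
\]
where $\xi(a,b)=p/a+q/b+pq/(ab)$ is decreasing in $a$ and in $b$. Since $a\le A^+$ and $b\le B^+$, the true factor is at least $\sqrt{1+\xi}$ with $\xi$ evaluated at $(A^+,B^+)$. The elementary inequality $\sqrt{1+\xi}\ge 1+\xi/(2+\xi)$ for $\xi\ge0$ then gives a factor of at least $F^-$. To check this inequality, square both sides: it becomes $(1+\xi)(2+\xi)^2\ge(2+2\xi)^2$, which expands to $\xi^3+\xi^2\ge0$.

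The delicate point is that the multiplicative update is monotone: if $z\ge Z^-\ge0$, then $zF_{\text{true}}\ge Z^-F^-$. For a \reclaim{}, both quantities decrease by the same $r$, and the inequality $z\ge Z^-$ persists. Delta entries leave $z$ unchanged.

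\textbf{Step 3: strict positivity.} I need $Z^->0$, and here the reclaim subtractions must be controlled. I plan to argue that each unresolved \reclaim{} burns a distinct portion already included in $z_0$, and that these portions are disjoint from the holdings of whoever remains. So the total reclaimed amount $\sum r$ is strictly less than $z_0$, because some token supply must remain: the reclaims are valid in every virtual pool and leave positive supply, and the token from \init{} or some unreclaimed portion persists. Since $F^-\ge1$, it follows that
\[
  Z^-\ge z_0-\sum r>0.
\]
To make this rigorous I would first check the reclaim updates along the ordering: the multiplications by $F^-\ge1$ only increase $Z^-$, so $Z^-$ at any point is at least $z_0$ minus the cumulative reclaimed amount. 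Validity of each \reclaim{} in the virtual pool that cancels all locks and replays the same provides then ensures that the cumulative burned amount of base portions stays below $z_0$. If needed, I would appeal to the invariant that burned base portions are distinct parts of $z_0$ whose sum is strictly below $z_0$, since $r<z$ holds at each step in the reconstruction of the base.
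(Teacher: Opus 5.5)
Your Steps 1 and 2 and the final combination with Lemma~\ref{lem:invariant} are the paper's safety argument. Step 3 (strict positivity), however, rests on a false claim: the total reclaimed amount $\sum r$ need not be strictly below $z_0$. A \reclaim{} only has to satisfy $r<z$ in every virtual pool, and that $z$ includes tokens minted by earlier unresolved \provides{}, even though those new portions cannot be named by a \reclaim{} in the event list.

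For instance, let the base supply $z_0=1$ consist only of the \init{} portion. Let the event list be an active lock, then a \provide$(p,q)$, then a \reclaim{} of the \init{} portion. Every virtual supply exceeds $1$ at that point, so the \reclaim{} with $r=1$ is valid and $\sum r=z_0$. The same objection applies to your appeal to the all-cancel virtual pool, whose supply also contains the provided tokens. In such states your chain gives only $Z^-\geq z_0-\sum r=0$, which does not show $\ell_{\mathrm{prod}}>0$.

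The missing idea is that in this boundary case positivity must come from strict growth at a \provide{}. Before the first \provide{}, every virtual supply equals $z_0$ minus the tokens burned so far. Validity therefore keeps the cumulative burn strictly below $z_0$, and $Z^-=z_0-R_{\mathrm{prefix}}>0$ there. Hence, if some prefix burns exactly $z_0$, a \provide{} precedes it, and for that \provide{} $p+q>0$ gives $\xi>0$ and $F^->1$. Now track the slack $Z^--(z_0-R_{\mathrm{prefix}})$: it is unchanged by a \reclaim{}, does not decrease at a \provide{} (since $Z^-\geq0$), and becomes strictly positive at that first \provide{}. So $Z^->0$ even when $\sum r=z_0$. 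This is the case split the paper makes: a burned amount below $z_0$ gives positivity immediately, and a burned amount equal to $z_0$ is rescued by an earlier \provide{} with $F^->1$.
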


\begin{proof}
We first verify that $a\leq A^+$ and $b\leq B^+$ after every prefix of every
virtual-pool replay.  The inequalities hold with equality at initialization.
An active delta entry contributes either $(0,0)$ or $(\alpha,\beta)$, and a
resolved aggregate contributes $(\alpha,\beta)$; in either case, the increase
in each reserve is at most the positive part used in the update above.  A
\provide{} adds exactly $(p,q)$ to both the replayed reserves and their bounds.
A \reclaim{} multiplies both replayed reserves by a number in $(0,1)$ while
leaving $A^+$ and $B^+$ unchanged.  Induction proves the two upper bounds.

At a \provide{}, its exact token
multiplier is
\[
  F=\sqrt{(1+p/a)(1+q/b)}.
\]
Since $a\leq A^+$ and $b\leq B^+$ before the \provide{},
\[
  F\geq\sqrt{1+\xi}
  \geq1+\frac{\xi}{2+\xi}=F^-.
\]
The last inequality follows from
\[
  \sqrt{1+\xi}-1
  =\frac{\xi}{\sqrt{1+\xi}+1}
  \geq\frac{\xi}{\xi+2}.
\]
Thus $z\geq Z^-$ at every step of the algorithm.

Every intermediate value of $Z^-$ is positive.  Each unresolved \reclaim{}
burns a distinct portion already included in $z_0$, so the amounts burned in
any prefix sum to at most $z_0$.  If the sum is less than $z_0$, positivity is
immediate.  If it equals $z_0$, the final \reclaim{} in that prefix could not
be valid without an earlier nonzero \provide{} increasing the token supply in
every virtual replay.  For that \provide{}, $\xi>0$ and $F^->1$, which leaves
$Z^->0$ after all $z_0$ base tokens have been subtracted.

For a final virtual pool, Lemma~\ref{lem:invariant} gives
\[
  ab=\left(\frac z\gamma\right)^2
  \geq\left(\frac{Z^-}{\gamma_0}\right)^2
  =\left(\frac{Z^-}{z_0}\right)^2a_0b_0.
\]
Also $a\leq A^+$.  Since $a(a+x)$ increases with $a>0$,
\[
  \frac{xb}{a+x}
  =\frac{xab}{a(a+x)}
  \geq
  \frac{x(Z^-/z_0)^2a_0b_0}{A^+(A^++x)}.
\]
This is~\eqref{eq:product-quote}.
\end{proof}

\subsection{An input-output balance quote}

For an A-to-B lock-swap request, let $\mathcal D$, $\mathcal P$, and
$\mathcal R$ denote the delta entries, \provides{}, and \reclaims{}.  For
$e\in\mathcal D$, write
$(\alpha_e,\beta_e)$ for its asset delta; for $e\in\mathcal P$, write
$(p_e,q_e)$ for the assets deposited; and for $e\in\mathcal R$, write $r_e$
for the amount of the token portion burned.  Define
\begin{equation}
\begin{aligned}
  U_A&=\sum_{e\in\mathcal D}\max\{0,\alpha_e\}
       +\sum_{e\in\mathcal P}p_e,\\
  D_B&=\sum_{e\in\mathcal D}\max\{0,-\beta_e\},
  &R&=\sum_{e\in\mathcal R}r_e.
\end{aligned}
  \label{eq:balance-totals}
\end{equation}
Here $U_A$ bounds additions to the input-asset balance, $D_B$ bounds
withdrawals from the output-asset balance, and $R$ is the total reclaimed
token amount.  The computation of $\ell_{\mathrm{prod}}$ ends with
$A^+=a_0+U_A$, so only $D_B$ and $R$ are new accumulators.  For a real number
$y$, write
$[y]_+=\max\{0,y\}$, and set
\begin{equation}
  \ell_{\mathrm{bal}}=
  \frac{x\left[(1-R/z_0)b_0-D_B\right]_+}
       {a_0+U_A+x}.
  \label{eq:balance-quote}
\end{equation}
The bracketed term is a lower bound on the final output-asset balance, while
the denominator uses an upper bound on the final input-asset balance.

\begin{lemma}
\label{lem:balance-bounds}
For every reachable compressed state, if $(a,b)$ is the final allocation of
one of its virtual pools, then
\[
  a\leq a_0+U_A,
  \qquad
  b\geq(1-R/z_0)b_0-D_B.
\]
\end{lemma}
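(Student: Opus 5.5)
The upper bound $a\leq a_0+U_A$ has already been established in the proof of Theorem~\ref{thm:product-quote}: there we showed $a\leq A^+$ along every prefix of every virtual replay, and the terminal value of $A^+$ is $a_0+U_A$. I would simply cite that argument. The work is in the lower bound on $b$. The case $R\geq z_0$ is trivial, because then the right-hand side is at most $-D_B\leq0<b$. So I assume $R<z_0$ from now on.

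Fix a virtual replay and index its \reclaims{} by $j$. Let $z_j$ be the virtual token supply just before the $j$-th \reclaim{}, and let $s_j=1-r_j/z_j\in(0,1)$ be its survival factor. Unrolling the B-coordinate of the replay gives an exact decomposition:
\[
  b=b_0\prod_j s_j+\sum_{e\in\mathcal P}q_e\prod_{j\text{ after }e}s_j+\sum_{e\in\mathcal D\text{ applied}}\beta_e\prod_{j\text{ after }e}s_j .
\]
This holds because \provides{} and deltas add to $b$, while a \reclaim{} multiplies the current $b$ by $s_j$. The \provide{} terms are nonnegative, so I drop them. Each delta term is at least $-\max\{0,-\beta_e\}$, since every product of survival factors lies in $(0,1]$. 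This yields
\[
  b\geq b_0\prod_j s_j-D_B .
\]

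It remains to show $\prod_j s_j\geq 1-R/z_0$. Deltas do not change $z$ and \provides{} only increase it. Hence $z_j\geq z_0-R_{<j}$, where $R_{<j}$ is the total amount burned by the earlier \reclaims{}. Since $R<z_0$, every factor satisfies
\[
  s_j\geq 1-\frac{r_j}{z_0-R_{<j}}=\frac{z_0-R_{\leq j}}{z_0-R_{<j}}>0 .
\]
The product of these lower bounds telescopes to $(z_0-R)/z_0$, and multiplying the positive lower bounds is legitimate because they are all positive.

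The main subtlety is this last step. It needs a lower bound $z_j\geq z_0-R_{<j}$ on the virtual token supply that is uniform over resolutions, and it needs positivity of the telescoping factors, which is why the case $R\geq z_0$ is split off first. One must also check that a \reclaim{}'s $r_j$ is subtracted at its position in every replay, as stated in Section~\ref{sec:virtual-pool-model}, so that $z$ evolves exactly as assumed.
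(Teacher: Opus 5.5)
Your proof is correct and follows essentially the paper's argument. Both use the same expansion of the final B reserve into reclaim-scaled base, provide, and delta contributions, and the same telescoping bound $1-r_j/z_j\geq (z_0-R_{\le j})/(z_0-R_{<j})$, obtained from $z_j\geq z_0-R_{<j}$.

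The differences are minor. You get $a\leq a_0+U_A$ by citing the $A^+$ invariant from the proof of Theorem~\ref{thm:product-quote} rather than from the expansion. You also dispose of the case $R\geq z_0$ up front, where the paper instead uses its observation that distinct unresolved portions force $R\leq z_0$.
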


\begin{proof}
List the \reclaims{} chronologically as $r_1,r_2,\ldots$, and put
$R_j=\sum_{i\leq j}r_i$.  Immediately before \reclaim{} $j$, every replay has
token supply $z_j\geq z_0-R_{j-1}$, because \provides{} only increase the token
supply.  Distinct unresolved portions are drawn from the base supply, so
$R_j\leq z_0$.  Its \reclaim{} scale therefore satisfies
\[
  1-\frac{r_j}{z_j}
  \geq
  \frac{z_0-R_j}{z_0-R_{j-1}}.
\]
The product of all \reclaim{} scales is consequently at least $1-R/z_0$.

Expand a final reserve into its base contribution and the additive
contributions of delta entries and \provides{}.  Each contribution is multiplied
by the \reclaim{} scales that follow it.  For the A reserve, omit negative
contributions and all \reclaim{} shrinkage, giving $a\leq a_0+U_A$.  For the B
reserve, omit positive contributions.  Multiplication by a number in $(0,1]$
makes a negative contribution less negative, so replacing each such
contribution by its unscaled value preserves a lower bound.  The base
coefficient is at least $1-R/z_0$, which proves the second inequality.
\end{proof}

\begin{theorem}
\label{thm:balance-quote}
For every reachable compressed state and every exact A-to-B lock-swap request
with input $x>0$, the value $\ell_{\mathrm{bal}}$ in
Equation~\eqref{eq:balance-quote} is a safe quote.
\end{theorem}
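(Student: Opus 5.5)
The plan is to reduce the claim directly to Lemma~\ref{lem:balance-bounds} together with the monotonicity of the exact output formula~\eqref{eq:exact-output} in each reserve. First, $\ell_{\mathrm{bal}}\geq 0$ is immediate: the numerator is a nonnegative bracket times $x>0$, and the denominator satisfies $a_0+U_A+x>0$ because $a_0>0$ and $U_A\geq0$. It therefore remains to show $\ell_{\mathrm{bal}}\leq\mu$. By~\eqref{eq:minimum-quote}, it suffices to fix an arbitrary resolution $\tau$ with final allocation $(a_\tau,b_\tau)$ and prove
\[
  \ell_{\mathrm{bal}}\leq\frac{x b_\tau}{a_\tau+x}.
\]

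I will split into two cases according to the bracket. If $(1-R/z_0)b_0-D_B\leq 0$, then $\ell_{\mathrm{bal}}=0$. Since $b_\tau>0$ in every virtual pool (every replayed allocation is positive, as reachable states keep positive reserves), the right-hand side is positive and the inequality holds. Otherwise the bracket equals $(1-R/z_0)b_0-D_B$, which is positive. Lemma~\ref{lem:balance-bounds} then gives
\[
  0<(1-R/z_0)b_0-D_B\leq b_\tau
  \qquad\text{and}\qquad
  0<a_\tau\leq a_0+U_A.
\]
Because $xb/(a+x)$ is increasing in $b$ and decreasing in $a>0$, we may first replace $b_\tau$ by the smaller lower bound, which decreases the quantity. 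We may then replace $a_\tau$ by the larger upper bound, which enlarges the denominator $a+x$ and decreases the quantity further. The result is exactly~\eqref{eq:balance-quote}, so $\ell_{\mathrm{bal}}$ is bounded above by the output in every virtual pool. Taking the minimum over $\tau$ yields $\ell_{\mathrm{bal}}\leq\mu$.

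There is essentially no obstacle, since all the substantive work is done in Lemma~\ref{lem:balance-bounds}. The only point needing care is the sign handling: monotonicity in $b$ must only be applied once the lower bound is known to be nonnegative, which is exactly why the positive-part truncation $[\cdot]_+$ appears and why the case split is needed. The positivity of $a_\tau$ and $b_\tau$ is justified by the validity of reachable states, since every reserve in every virtual pool stays positive.
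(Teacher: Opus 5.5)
Your proof is correct and follows the paper's own argument: split on whether the bracket in~\eqref{eq:balance-quote} is nonpositive, and otherwise combine Lemma~\ref{lem:balance-bounds} with the monotonicity of $xb/(a+x)$. One small correction to your closing remark: the sign condition is needed for monotonicity in $a$, not in $b$, since $xb/(a+x)$ decreases in $a$ only when $b\geq0$. Your ordering (substitute $b$ first, then $a$, inside the positive case) already handles this correctly.
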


\begin{proof}
If the numerator in~\eqref{eq:balance-quote} is nonpositive, then
$\ell_{\mathrm{bal}}=0$ is safe.  Otherwise
Lemma~\ref{lem:balance-bounds} and the
monotonicity of $xb/(a+x)$ give, for every final virtual pool,
\[
  \frac{xb}{a+x}
  \geq
  \frac{x((1-R/z_0)b_0-D_B)}{a_0+U_A+x}
  =\ell_{\mathrm{bal}}.
\]
Hence $\ell_{\mathrm{bal}}\leq\mu$.
\end{proof}

\subsection{Approximation guarantees from cumulative load}
\label{sec:small-load}

Safety alone does not say how conservative a quote is.  The following
parameter gives both algorithms a useful performance guarantee.  Define
\begin{equation}
\begin{aligned}
  L_A&=\sum_{e\in\mathcal D}|\alpha_e|
       +\sum_{e\in\mathcal P}p_e,
  &L_B&=\sum_{e\in\mathcal D}|\beta_e|
       +\sum_{e\in\mathcal P}q_e,\\
  \eta&=\frac{R}{z_0}
       +\max\left\{\frac{L_A}{a_0},\frac{L_B}{b_0}\right\}.
\end{aligned}
  \label{eq:load-parameter}
\end{equation}
This dimensionless statistic measures total unresolved activity relative to
the compressed base pool.  It is conservative because it counts both
directions of every delta even though an active lock may be canceled.  It is
computed by three accumulators during the quote computation and is not a tuning
parameter of either algorithm.

For $0\leq\eta<1$, set
\begin{equation}
  c_\eta=\frac{1-\eta}{1+\eta},
  \qquad
  q_\eta(x)=
  \frac{(1-\eta)a_0+x}{(1+\eta)a_0+x}.
  \label{eq:small-load-factors}
\end{equation}

\begin{theorem}
\label{thm:small-load}
For every reachable compressed state with $\eta<1$ and every exact A-to-B
lock-swap request with input $x>0$,
\begin{equation}
\begin{aligned}
  c_\eta^2q_\eta(x)\mu&\leq\ell_{\mathrm{prod}}\leq\mu,
  &c_\eta q_\eta(x)\mu&\leq\ell_{\mathrm{bal}}\leq\mu.
\end{aligned}
  \label{eq:small-load-with-input}
\end{equation}
In particular,
\begin{equation}
  c_\eta^3\mu\leq\ell_{\mathrm{prod}}\leq\mu,
  \qquad
  c_\eta^2\mu\leq\ell_{\mathrm{bal}}\leq\mu.
  \label{eq:uniform-small-load}
\end{equation}
\end{theorem}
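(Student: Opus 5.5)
The upper bounds $\ell_{\mathrm{prod}}\leq\mu$ and $\ell_{\mathrm{bal}}\leq\mu$ are Theorems~\ref{thm:product-quote} and~\ref{thm:balance-quote}, so only the lower bounds need work. The plan is to compare each quote with a single well-chosen virtual pool, or with a uniform upper bound on every virtual output, since $\mu$ is at most the output of any virtual pool.

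\textbf{Upper bound on $\mu$.} I first bound every final virtual allocation from both sides by the load parameters. By the expansion used in Lemma~\ref{lem:balance-bounds}, every final reserve satisfies $a\leq a_0+L_A\leq(1+\eta)a_0$ and $b\leq b_0+L_B\leq(1+\eta)b_0$. For the lower side, the product of the \reclaim{} scales is at least $1-R/z_0$, and negative contributions total at most $L_A$ (resp.\ $L_B$). This gives
\[
a\geq(1-R/z_0)a_0-L_A\geq(1-\eta)a_0, \qquad b\geq(1-\eta)b_0 .
\]
I will pick an arbitrary resolution, say all locks canceled. Its output is at most $x(1+\eta)b_0/((1-\eta)a_0+x)$, which bounds $\mu$ from above.

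\textbf{Balance quote.} For $\ell_{\mathrm{bal}}$ I use the definitions directly. We have $D_B\leq L_B$ and $R/z_0+L_B/b_0\leq\eta$, so the numerator satisfies $(1-R/z_0)b_0-D_B\geq(1-\eta)b_0$. The denominator satisfies $a_0+U_A+x\leq(1+\eta)a_0+x$ because $U_A\leq L_A$. Dividing the resulting lower bound on $\ell_{\mathrm{bal}}$ by the upper bound on $\mu$ gives
\[
\frac{\ell_{\mathrm{bal}}}{\mu}\;\geq\;\frac{(1-\eta)b_0}{(1+\eta)b_0}\cdot\frac{(1-\eta)a_0+x}{(1+\eta)a_0+x}\;=\;c_\eta q_\eta(x).
\]

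\textbf{Product quote.} Here I need a lower bound on $Z^-/z_0$, and I expect this to be the main obstacle. The register $Z^-$ only loses mass at \reclaims{}, since $F^-\geq1$, so $Z^-\geq z_0-R\geq(1-\eta)z_0$. That yields $(Z^-/z_0)^2\geq(1-\eta)^2$, while $A^+\leq(1+\eta)a_0$. The quote is therefore at least $x(1-\eta)^2a_0b_0/\bigl((1+\eta)a_0((1+\eta)a_0+x)\bigr)$. Comparing with the upper bound on $\mu$ gives
\[
\frac{\ell_{\mathrm{prod}}}{\mu}\;\geq\;\frac{(1-\eta)^2}{(1+\eta)^2}\,q_\eta(x)\;=\;c_\eta^2q_\eta(x).
\]
One factor of $(1-\eta)$ comes from $Z^-$ and the other from $q_\eta$'s pairing, so the exponents work out without further losses. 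The delicate point is making sure the \reclaim{} fraction and the asset load share the single budget $\eta$ rather than being double-counted, and I would check this at each of the three bounds.

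\textbf{Uniform bounds.} Finally, $q_\eta(x)$ is increasing in $x$, with value $c_\eta$ at $x=0$. Hence $q_\eta(x)\geq c_\eta$ for all $x>0$, which yields~\eqref{eq:uniform-small-load}.
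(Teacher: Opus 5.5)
Your proposal is correct and follows the paper's proof essentially step for step. It uses the same two-sided reserve bounds to get $\mu\le x(1+\eta)b_0/((1-\eta)a_0+x)$, the same estimates $Z^-\ge(1-\eta)z_0$, $A^+\le(1+\eta)a_0$, $U_A\le L_A$, $D_B\le L_B$, and the same monotonicity of $q_\eta$. One cosmetic point: both factors $(1-\eta)$ in $c_\eta^2$ come from $(Z^-/z_0)^2$, the two factors $(1+\eta)$ come from $A^+$ and from $b\le(1+\eta)b_0$, and $q_\eta$ is a separate factor, so your side remark about the pairing is off even though your displayed computation is right.
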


\begin{proof}
List the \reclaims{} chronologically as $r_1,r_2,\ldots$, and put
$R_j=\sum_{i\leq j}r_i$.  Immediately before \reclaim{} $j$, every exact replay
has token supply
\[
  z_j\geq z_0-R_{j-1},
\]
because \provides{} only increase the token supply.  Here
$R_j\leq R<z_0$, since $\eta<1$.  Its \reclaim{} scale therefore satisfies
\[
  1-\frac{r_j}{z_j}
  \geq
  \frac{z_0-R_j}{z_0-R_{j-1}}.
\]
Multiplying these inequalities shows that the product of all \reclaim{} scales
is at least $1-R/z_0$.

Expand either final asset amount into the contribution from its base amount
and the additive contributions from delta entries and \provides{}.  Each
contribution is multiplied by the \reclaim{} scales that follow it.  The base
coefficient is at least $1-R/z_0$.  A negative delta contributes no less than
its unscaled value, while a positive contribution can be omitted from a lower
bound.  For an upper bound, \reclaim{} scales and negative deltas can instead be
omitted.  Consequently, if $(a,b)$ is any final virtual-pool allocation, then
\begin{align*}
  (1-R/z_0)a_0-L_A&\leq a\leq a_0+L_A,\\
  (1-R/z_0)b_0-L_B&\leq b\leq b_0+L_B.
\end{align*}
The definition of $\eta$ gives the simpler bounds
\begin{equation}
  (1-\eta)a_0\leq a\leq(1+\eta)a_0,
  \qquad
  (1-\eta)b_0\leq b\leq(1+\eta)b_0.
  \label{eq:small-load-bounds}
\end{equation}
In particular, every virtual-pool quote value is at most
\begin{equation}
  U_\eta=\frac{x(1+\eta)b_0}{(1-\eta)a_0+x},
  \qquad\text{and hence}\qquad \mu\leq U_\eta.
  \label{eq:small-load-mu-upper}
\end{equation}

In the computation of $\ell_{\mathrm{prod}}$, every \provide{} multiplier is
at least one.  The final token lower bound and asset upper bound therefore obey
\[
  Z^-\geq z_0-R\geq(1-\eta)z_0,
  \qquad
  A^+\leq a_0+L_A\leq(1+\eta)a_0.
\]
Equation~\eqref{eq:product-quote} now gives
\[
  \ell_{\mathrm{prod}}\geq
  \frac{x(1-\eta)^2a_0b_0}
       {(1+\eta)a_0((1+\eta)a_0+x)}.
\]
Dividing this lower bound by~\eqref{eq:small-load-mu-upper} yields
$\ell_{\mathrm{prod}}/\mu\geq c_\eta^2q_\eta(x)$.

Moreover, $U_A\leq L_A$ and $D_B\leq L_B$.  The numerator of
$\ell_{\mathrm{bal}}$ is
positive when $\eta<1$, and
\[
  \ell_{\mathrm{bal}}\geq
  \frac{x(1-\eta)b_0}{(1+\eta)a_0+x},
\]
and division by~\eqref{eq:small-load-mu-upper} gives
$\ell_{\mathrm{bal}}/\mu\geq c_\eta q_\eta(x)$.  Finally,
$q_\eta(x)\geq c_\eta$ for $x\geq0$, proving
Equation~\eqref{eq:uniform-small-load}.
\end{proof}

For every fixed load cap $\bar\eta<1$, the two quotes are ordinary
constant-factor approximations on the class
$\eta\leq\bar\eta$.  Their factors are at most
\[
  \left(\frac{1+\bar\eta}{1-\bar\eta}\right)^3
  \quad\text{and}\quad
  \left(\frac{1+\bar\eta}{1-\bar\eta}\right)^2,
\]
respectively.  Equivalently,
$\ell_{\mathrm{prod}}/\mu\geq1-6\eta+O(\eta^2)$ and
$\ell_{\mathrm{bal}}/\mu\geq1-4\eta+O(\eta^2)$ as $\eta\to0$.

The following values are worst-case guarantees over every reachable event list
and every resolution of the active locks:
\[
\begin{array}{c|cc}
  \eta & c_\eta^3 & c_\eta^2\\ \hline
  0.0001 & 0.9994 & 0.9996\\
  0.001 & 0.9940 & 0.9960\\
  0.005 & 0.9704 & 0.9801\\
  0.010 & 0.9417 & 0.9607\\
  0.020 & 0.8869 & 0.9231\\
  0.050 & 0.7406 & 0.8185
\end{array}
\]
The guarantees in Equation~\eqref{eq:small-load-with-input} are stronger when
the requested input $x$ is not negligible relative to $a_0$.  Exchanging A
and B gives the corresponding statement for a B-to-A lock-swap request.

One may of course return
$\max\{\ell_{\mathrm{prod}},\ell_{\mathrm{bal}}\}$.  Both constituent quotes
are safe, so their maximum is safe.  It is strictly positive because
$\ell_{\mathrm{prod}}>0$ and is at least as accurate as either constituent
quote.

\subsection{Computation with exact rational and finite-precision arithmetic}
The input-output balance quote has a particularly small implementation using
integer arithmetic.  Clearing the token denominator gives
\begin{equation}
  \ell_{\mathrm{bal}}=
  \frac{x[\,b_0(z_0-R)-z_0D_B\,]_+}
       {z_0(a_0+U_A+x)}.
  \label{eq:integer-balance-quote}
\end{equation}
Thus widened integer sums and products compute its numerator and denominator,
and one downward-rounded integer division produces a safe lower bound on the
exact-model quote.

For rationally encoded inputs, the two algorithms are straight-line programs over
rational field operations and maximum gates, with $s=O(m+1)$ gates.  Let $h$
be the largest bit length of a numerator or denominator in the input.  A
direct induction over the displayed running sums and the recurrence for $Z^-$
bounds the numerator and denominator bit lengths of every reduced intermediate
value by some $\Lambda=O((m+1)^2(h+1))$; the final squaring in
$\ell_{\mathrm{prod}}$ and the final multiplication by $x$ preserve this bound
up to a constant factor.  Exact rational evaluation is therefore
polynomial-time in the standard Turing machine model.

More relevant for deployment, for an integer $d\geq0$, an implementation
using fixed-point arithmetic may instead enclose each
input between adjacent points of $2^{-d}\mathbb Z$, round every interval
operation outward, and return the positive part of the final lower endpoint.
Gatewise containment preserves safety.  Every positive divisor and every
positive exact-rational output of either algorithm is at least $2^{-\Lambda}$,
while all intermediate magnitudes are at most $2^\Lambda$.  A standard interval-error
induction gives a universal constant
$C$ such that, for every integer $\kappa\geq1$, choosing
$d\geq C(s\Lambda+\Lambda+\kappa)$ and using checked integer arithmetic with polynomial
total word width suffices to return, for
$\ell\in\{\ell_{\mathrm{prod}},\ell_{\mathrm{bal}}\}$, a value
$\widehat\ell$ satisfying
\[
  (1-2^{-\kappa})\ell\leq\widehat\ell\leq\ell\leq\mu.
\]
At this precision, the rounded quote incurs relative loss at most
$2^{-\kappa}$.  At lower precision, a divisor interval containing zero
triggers the safe fallback value zero.

\section{Concluding Remarks}
\label{sec:conclusion}

The open problem of Aanes et al.~\cite{aanes2025lockswap} asked whether the
exact worst-case quote of a retroactive pool can be computed efficiently in
general.  We have shown that, unless $\mathrm{P}=\mathrm{NP}$, neither the
exact value nor any fixed-factor safe approximation can be computed in
polynomial time.  Fortunately, we have also shown that this hardness
is not an obstacle to deployment: a retroactive pool does not need the exact
minimum, only a safe quote, and safe quotes with an explicit, per-state
accuracy certificate can be computed in practice with one linear scan and a constant number of registers.  In the regime where the stored unresolved activity is light
relative to the pool, which is the regime in which one would expect a healthy pool to
operate, the certified accuracy is close to optimal. If the pool operates outside of that regime, the quotes are still safe, though they might not be
very good.
The
safety arguments use only the inequality $(a+u)(b-v)\geq ab$ for executed
locks, which fee-bearing execution preserves, so both algorithms remain safe in a
pool that charges fees; we can expect the load-dependent guarantees to extend as
well, suitably modified, but we do not pursue this here.

\bibliographystyle{alpha}
\bibliography{references}

@misc{aanes2025lockswap,
  author = {Aanes, Jon Michael and Gravgaard, Jesper Balman and Miltersen, Peter Bro and Nielsen, Kurt and Pourpouneh, Mohsen},
  title = {Automated Market Makers for Cross-chain {DeFi} and Sharded Blockchains},
  year = {2025},
  eprint = {2309.14290},
  archivePrefix = {arXiv},
  primaryClass = {cs.DC},
  doi = {10.48550/arXiv.2309.14290},
  howpublished = {\href{https://arxiv.org/abs/2309.14290v3}{arXiv:2309.14290v3 [cs.DC]}},
  note = {Version 3, last revised January 27, 2025}
}

@book{gareyjohnson1979,
  author = {Garey, Michael R. and Johnson, David S.},
  title = {Computers and Intractability: A Guide to the Theory of NP-Completeness},
  publisher = {W. H. Freeman},
  address = {San Francisco},
  year = {1979},
  isbn = {0-7167-1045-5}
}

\end{document}